\newtheorem{theorem}{Theorem}
\newtheorem{lemma}[theorem]{Lemma}
\theoremstyle{remark}
\newtheorem{remark}[theorem]{Remark}
\newtheorem{fact}[theorem]{Fact}
\newtheorem*{example}{Example}
\newtheorem*{remark*}{Remark}
\newcommand{\vek}{\mathcal D}
\newcommand{\ep}{\varepsilon}
\newcommand{\floor}[1]{\left\lfloor #1 \right\rfloor}
\newcommand{\E}{{\mathcal E}}
\newcommand{\bb}{\mathbf b}
\newcommand{\f}{\mathbf f}
\newcommand{\nul}{\mathbf 0}
\newcommand{\NN}{\mathbb N}
\newcommand{\sa}{s}
\newcommand{\sbb}{s'}
\newcommand{\scc}{s''}
\newcommand{\daa}{d}
\newcommand{\dbb}{d'}
\newcommand{\dcc}{d''}
\newcommand{\comment}[1]{}
\newcommand{\gmod}{\succ}
\newcommand{\gmodeq}{\succeq}
\begin{document}

\title{Abelian powers in paper-folding words}

\author{\v St\v ep\'an Holub}

\address{Department of Algebra, Charles University, Sokolovsk\'a 83, 175 86 Praha, Czech Republic}
\email{holub@karlin.mff.cuni.cz}
\urladdr{http://www.karlin.mff.cuni.cz/~holub/}

\begin{abstract}
We show that paper-folding words contain arbitrarily large abelian powers.
\end{abstract}

\maketitle
\section{Introduction}
Study of abelian powers in infinite words dates back to Erd\H{o}s's question whether there is an infinite word avoiding abelian squares (\cite{erdos}). Abelian patterns, their presence or avoidability in infinite words, are a natural generalization of analogous questions for ordinary patterns. Both variants are amply studied.  Thue's famous square-free word over three letters (see \cite{Thue1906}) has its counterpart in Ker\"anen's construction of an abelian square-free word (\cite{veikko}). In \cite{Currie}, it is shown that all long abelian patterns are avoidable over two letters.

Paper-folding words are (infinite) words that can be represented by repeated folding of a paper strip. Some interesting properties of these words and further references can be found in \cite{Allouche1,Allouche2}, see also \cite{narad}.

During the workshop \emph{Outstanding Challenges in Combinatorics on Words} at BIRS, James Currie raised the question (asked already in 2007 by Manuel Silva, \cite{pers}) whether paper-folding words contain arbitrarily large abelian powers. In this paper, we answer the question positively for all paper-folding words.

\section{Preliminaries}
Paper-folding words can be defined in several equivalent ways. They are binary words that can be represented as limits of an infinite process of folding a strip of paper. The process is governed by another binary word $\bb=b_0b_1b_2\cdots $ called the \emph{sequence of instructions} that tells whether the corresponding fold should be a hill or a valley. We shall follow the notation from \cite{jeff} and use the binary alphabet $\{1,-1\}$ for both the sequence of instructions and the resulting paper-folding word $\f=f_1f_2f_3\cdots$. Note that the sequence of instructions is indexed starting from $0$, while the paper-folding word starting from $1$, which is justified by the following defining formula:
$$f_i=(-1)^jb_k, \quad\quad \text{where $i=2^k(2j+1)$}.$$
An equivalent formulation is that 
\[f_i=-1 \quad\quad \text{iff} \quad\quad i\equiv (2+b_k)\cdot 2^k \mod 2^{k+2}. \]
If $i\equiv (2+b_k)\cdot 2^k \mod 2^{k+2}$, then we say that $f_i$ is \emph{$-1$ of order $k$}.  The \emph{regular} paper-folding word is defined by $b_i=1$ for all $i$.

A word $w$ is said to be an \emph{abelian $m$-power} if $w=w_1w_2\cdots w_m$, and the word  $w_j$ can be obtained from $w_i$ by permutation of its letters for each $i,j\in \{1,2,\dots,m\}$. In other words, $|w_i|_a=|w_j|_a$ for each letter $a$, where $|u|_a$ denotes the number of occurrences of the letter $a$ in $u$. We say that $\f$ \emph{contains an abelian $m$-power} if $f_if_{i+1}\cdots f_j$ is an ablian $m$-power for some $0<i\leq j$.

An important ingredient of our considerations will be inequalities modulo a given number (in fact, always a power of two). This requires some clarification which should prevent possible confusion. By $(a \mod n)$ we denote the unique integer in $\{0,1,\dots,n-1\}\cap (a+n \cdot \mathbb Z)$. We write \[a\gmod b \mod n\] if and only if 
\[(a \mod n) > (b \mod n)\] in the standard integer order. Similarly we define $a\gmodeq b \mod n$.

Note the following fact.
\begin{fact}\label{fact}
Let $a,b,c,d \in \mathbb Z$. If 
 \[(a \mod n)+c,\, (b\mod n)+d \in \{0,1,\dots,n-1\},\]
then
\begin{align*}
	a+c&\gmod b+d \mod n\quad \text{if and only if} \quad (a \mod n)+c>(b\mod n)+d,\\
	a+c&\gmodeq b+d \mod n\quad \text{if and only if} \quad (a \mod n)+c\geq(b\mod n)+d.
\end{align*}
\end{fact}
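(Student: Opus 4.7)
The plan is to reduce the statement to the definition of $\gmod$ and then use the normalization hypothesis to replace residues modulo $n$ with ordinary integers. Unfolding the definition, the relation $a+c\gmod b+d \mod n$ says exactly that $((a+c)\mod n) > ((b+d)\mod n)$ as non-negative integers. So the whole task is to compute the two residues on the right-hand side in a form that exposes the integers being compared.

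The key arithmetic step is the identity $(a+c)\mod n = (a\mod n)+c$, valid whenever $(a\mod n)+c\in\{0,1,\dots,n-1\}$. This is immediate from uniqueness of the canonical representative: both sides are congruent to $a+c$ modulo $n$, and by hypothesis both lie in $\{0,1,\dots,n-1\}$, so they must coincide. The analogous identity $(b+d)\mod n = (b\mod n)+d$ follows the same way. Substituting both into the unfolded definition yields the first equivalence; the second one, involving $\gmodeq$, is proved identically with $\geq$ in place of $>$.

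I do not anticipate any real obstacle. This is a bookkeeping lemma whose role is to license later arguments to manipulate congruence inequalities as if they were ordinary integer inequalities, provided that the added terms $c$ and $d$ do not push either sum past $n-1$. The only point requiring care is not to silently assume $c,d\geq 0$: they can be negative, and the hypothesis is exactly what rules out wrap-around in either direction.
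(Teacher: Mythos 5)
Your proof is correct: the paper states this Fact without proof, and your argument---reducing $\gmod$ to its definition and using uniqueness of the canonical representative in $\{0,1,\dots,n-1\}$ to get $(a+c)\bmod n=(a\bmod n)+c$ under the no-wrap-around hypothesis---is exactly the intended justification. Your closing remark that $c$ and $d$ may be negative is also the right point of care, since the lemma is later applied with subtracted terms.
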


Note also that while $a\gmod b \mod n$ always implies $a\gmodeq b+1 \mod n$, the inverse implication never holds if $b+1\equiv 0 \mod n$.

\section{The problem}
We are given a paper folding word $\f$ defined by a sequence of instructions $\bb$. Our task is to find, for each $m$, numbers $s\geq 0$ and $d\geq 1$ such that the word
\[w_0w_2\dots w_{m-1},\]
where $w_j$, $j=0,1,\dots,m-1$, is defined by
\[w_j=f_{s+jd+1}f_{s+jd+2}\cdots f_{s+(j+1)d},\]
is na abelian $m$-power.
That is, 
\begin{align}\label{vektor}
	\left(|w_0|_{-1},|w_2|_{-1},\dots,|w_{m-1}|_{-1}\right)
\end{align}
is a constant vector.

For $b\in \{-1,1\}$, $k\geq 0$ and $\ell,n\geq 1$, denote
\[
D_{k,b}(\ell,n):=\#\{i\ |\ \ell< i\leq n,\, i\equiv (2+b)\cdot 2^k\mod 2^{k+2}\}.
\]
Then we have
\[\left|f_{\ell+1}f_{\ell+2}\cdots f_{n}\right|_{-1}=\sum_{k=0}^\infty D_{k,b_k}(\ell,n).\]
Note that $D_{k,b_k}(\ell,n)$ counts the number of $-1$s of order $k$ present in the  word $f_{\ell+1}f_{\ell+2}\cdots f_{n}$. This ``stratification'' is very useful, since $D_{k,b}(\ell,n)$  can be well estimated from the length of the interval. 
\begin{lemma}
\[D_{k,b}(\ell,n)=
\floor{\frac{n-\ell}{2^{k+2}}}
+\ep_{k,b}(\ell,n), \]
where 
\begin{align}\label{plusone}
	\ep_{k,b}(\ell,n)=\left\{ 
	\begin{array}{cl}
		1 & \text{if} \quad n-\ell\gmod(2+b)\cdot 2^k-(\ell+1) \mod 2^{k+2},\\[2ex]
		0 & \text{otherwise.} 
	\end{array}
	\right.
\end{align}
\end{lemma}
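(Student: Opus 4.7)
The plan is to reduce the congruence count to a clean one-dimensional counting problem and then read off the error term.

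Set $N=2^{k+2}$ and $r=(2+b)\cdot 2^k$, so that $D_{k,b}(\ell,n)$ counts integers $i\in\{\ell+1,\ell+2,\dots,n\}$ with $i\equiv r\pmod N$. First I would translate: the change of variable $j=i-(\ell+1)$ is a bijection onto $\{0,1,\dots,M-1\}$ with $M:=n-\ell$, and transforms the congruence into $j\equiv r-(\ell+1)\pmod N$. Writing $r':=(r-(\ell+1))\bmod N\in\{0,1,\dots,N-1\}$, the quantity $D_{k,b}(\ell,n)$ becomes simply
\[
\#\{j\in\{0,1,\dots,M-1\}\tz j\equiv r'\bmod N\}.
\]

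Next I would establish the following elementary arithmetic identity: for any integers $M\geq 0$ and $0\leq r'<N$,
\[
\#\{j\in\{0,1,\dots,M-1\}\tz j\equiv r'\bmod N\}=\floor{\tfrac{M}{N}}+[(M\bmod N)>r'],
\]
where the bracket is $1$ if the inequality holds and $0$ otherwise. The justification is just to split $\{0,\dots,M-1\}$ into $\floor{M/N}$ full blocks of length $N$, each contributing exactly one representative of $r'$, plus a final partial block $\{0,\dots,(M\bmod N)-1\}$ shifted to $\{\floor{M/N}\cdot N,\dots,M-1\}$, which contributes an additional $1$ precisely when $r'<M\bmod N$.

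Applying this with $M=n-\ell$ and $r'=((2+b)\cdot 2^k-(\ell+1))\bmod 2^{k+2}$ instantly yields $D_{k,b}(\ell,n)=\floor{(n-\ell)/2^{k+2}}+\ep_{k,b}(\ell,n)$, with the correction being $1$ exactly when
\[
(n-\ell)\bmod 2^{k+2}\;>\;\bigl((2+b)\cdot 2^k-(\ell+1)\bigr)\bmod 2^{k+2},
\]
which, by the definition of $\gmod$, is precisely the condition displayed in \eqref{plusone}.

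The only slightly delicate point is the boundary behaviour of the elementary identity when $M\bmod N=0$, but here $r'\geq 0=M\bmod N$ automatically, so the bracket is $0$, matching the intuitive count of full blocks; this is the step I would write out carefully to avoid off-by-one confusion. Apart from that, the proof is a routine translation-and-counting exercise.
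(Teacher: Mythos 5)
Your proof is correct and follows essentially the same route as the paper: both arguments split the range into $\floor{(n-\ell)/2^{k+2}}$ full blocks of length $2^{k+2}$, each contributing exactly one representative of the residue class, plus a partial block whose contribution is exactly $\ep_{k,b}(\ell,n)$. Your normalization $j=i-(\ell+1)$ merely makes explicit the boundary check that the paper leaves as ``one readily verifies,'' and your treatment of the $M\bmod N=0$ edge case is sound.
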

\begin{proof}
Let
\[ q=\floor{\frac{n-\ell}{2^{k+2}}}\quad\quad \text{and} \quad\quad r\equiv n-\ell \mod 2^{k+2} \]
Divide the interval $(\ell,n]$ of integers into $q+1$ subintervals
\begin{align*}
	&[\ell+j\cdot 2^{k+2}+1,\ell+(j+1)\cdot 2^{k+2}],\quad j=0,1,\dots,q-1, \\
	&[\ell+q\cdot 2^{k+2}+1,\ell+q\cdot 2^{k+2}+r]. 
\end{align*}
First $q$ intervals have length $2^{k+2}$ whence each of them contains exactly one number equal to $(2+b)\cdot 2^k
 \mod 2^{k+2}$. The last interval, possibly empty, may or may not contain such a number. One readily verifies that the value 
of $\ep_{k,b}(\ell,n)$ defined by \eqref{plusone} indicates the presence of that additional $(2+b)\cdot 2^k
 \mod 2^{k+2}$.

\end{proof}

\begin{example}
An interval $(\ell,\ell+6]$ of length six is supposed to contain one number $3 \mod 4$ and no number $6 \mod 8$. However,
if $\ell\mod 4$ is $1$ or $2$, then the interval contains two numbers $3 \mod 4$ and $\ep_{0,1}(\ell,\ell+6)=1$. Moreover, if $\ell \mod 8$ is not $6$ or $7$ then the interval contains a number $6 \mod 8$ and $\ep_{1,1}(\ell,\ell+6)=1$ (see Figure \ref{kulicky}).
\end{example}

\begin{figure}
\begin{center}
\begin{tikzpicture}%[scale=0.9]
\def\roz{-0.5}
\def\rozs{0.7}
\def\slaa{-3*\rozs}
\def\slcc{-0.8*\rozs}
\def\slbb{-2*\rozs}
%%%%%%%%%%%
\node at (\slbb,-\roz){$\ep_{0,1}$};
\node at (\slcc,-\roz){$\ep_{1,1}$};
\node at (\slaa,-\roz+0.1){$\ell$};
%%%%%%%%%%%
\draw (\slaa-0.2,-\roz-0.2)--(\slcc+0.3,-\roz-0.2);
%%%%%%%%%%%
\foreach \x/\y/\z in {
0/0/1,
1/1/1,
2/1/1,
3/0/1,
4/0/1,
5/1/1,
6/1/0,
7/0/0,
8/0/1
}
{
\node at (\slaa,\x*\roz) {$\x$};
\node at (\slbb,\x*\roz) {$\y$};
\node at (\slcc,\x*\roz) {$\z$};
}
%%%%%%%%%%
%%%%%%%%%%%
\node at (1*\rozs,0) {$1$};
%%%%%%%%%%%
\foreach \x in {0,1,2}
{\node[circle,fill,color=blue!30,inner sep=0.4em] at (3*\rozs,\x*\roz){};
\node at (3*\rozs,\x*\roz){$3$};
}
%%%%%%%%%%
\foreach \x in {0,1}
{\node at (2*\rozs,\x*\roz){$2$};
}
%%%%%%%%%%
\foreach \x in {0,1,2,3}
{\node at (4*\rozs,\x*\roz){$4$};
}
%%%%%%%%%%
\foreach \x in {0,1,2,3,4}
{\node at (5*\rozs,\x*\roz){$5$};
}
%%%%%%%%%%
\foreach \x in {0,1,2,3,4,5}
{\node[circle,draw=green,very thick,fill=green!30,inner sep=0.4em] at (6*\rozs,\x*\roz){};
\node at (6*\rozs,\x*\roz){$6$};
}
%%%%%%%%%%
\foreach \x in {1,2,3,4,5,6}
{\node[circle,fill,color=blue!30,inner sep=0.4em] at (7*\rozs,\x*\roz){};
\node at (7*\rozs,\x*\roz){$7$};
}
%%%%%%%%%%
\foreach \x in {2,3,...,7}
{\node at (8*\rozs,\x*\roz){$8$};
}
%%%%%%%%%%
\foreach \x in {3,4,...,8}
{\node at (9*\rozs,\x*\roz){$9$};
}
%%%%%%%%%%
\foreach \x in {4,5,...,8}
{\node at (10*\rozs,\x*\roz){$10$};
}
%%%%%%%%%%
\foreach \x in {5,6,7,8}
{\node[circle,fill,color=blue!30,inner sep=0.45em] at (11*\rozs,\x*\roz){};
\node at (11*\rozs,\x*\roz){$11$};
}
%%%%%%%%%%
\foreach \x in {6,7,8}
{\node at (12*\rozs,\x*\roz){$12$};
}
%%%%%%%%%%
\foreach \x in {7,8}
{\node at (13*\rozs,\x*\roz){$13$};
}
%%%%%%%%%%
\foreach \x in {8}
{\node[circle,draw=green,very thick,fill=green!30,inner sep=0.45em] at (14*\rozs,\x*\roz){};
\node at (14*\rozs,\x*\roz){$14$};
}
\end{tikzpicture} 
\caption{Structure of intervals  $(\ell,\ell+6]$.}\label{kulicky}
\end{center}
\end{figure}

It is easy to see that for each $\ell,u\geq 0$ and $b\in\{1,-1\}$
\begin{align}\label{2}
	\ep_{k,b}(\ell,\ell+2^u)=
	\left\{
	\begin{array}{cl}
	0 & \text{if $k\leq u-2$,}\\[1em]
	D_{k,b}(\ell,\ell+2^u) & \text{otherwise.}	
	\end{array}\right.
\end{align}

The whole problem concentrates in the distribution of those ``additional'' minus ones given by the mapping $\ep$. Let's therefore define
\begin{align*}\label{E}
	\E_{k,b}(s,d,m):=\left(\ep_{k,b}(s,s+d),\ep_{k,b}(s+d,s+2d),\dots,\ep_{k,b}(s+(m-1)d,s+md)\right).
\end{align*}
and
\[\Delta(s,d,m):=\sum_{k=0}^\infty \E_{k,b_k}(s,d,m). \]
Note that $\Delta(s,d,m)$ is just the vector \eqref{vektor} scaled down by a constant vector corresponding to the number of $-1$s expected given the length $d$. Our aim is to find (for a given $m$) numbers $s$ and $d$ such that $\Delta(s,d,m)$ is a constant vector.

\section{Additivity}
The key tool in the solution of the problem is the possibility to add two $\Delta$-vectors, formulated in the following lemma.
\begin{lemma}[Additivity of $\Delta$-vectors]\label{add}
Let $\sa$, $\sbb \geq 0$, and $\daa$, $\dbb \geq 1$ be positive integers such that $\sbb$ and $\dbb$ are even. Let $r$ be such that 
\begin{align}\label{r}
	2^{r}> \sa+m \daa, 
\end{align}
and
for each $i\geq 0$ the following implication holds:
\begin{align}\label{predpoklad}
	\text{if \quad $\E_{i,1}(\sbb,\dbb,m)\neq \E_{i,-1}(\sbb,\dbb,m)$\quad then	\quad $b_i=b_{i+r}$.}
	\end{align}
 Then
\[\Delta(\sa,\daa,m)+\Delta(\sbb,\dbb,m)=\Delta(\sa+2^r\sbb,\daa+2^r\dbb,m). \]
\end{lemma}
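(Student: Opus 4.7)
The plan is to verify $\Delta(\sa + 2^r\sbb, \daa + 2^r\dbb, m) = \Delta(\sa, \daa, m) + \Delta(\sbb, \dbb, m)$ by analysing each summand $\E_{k, b_k}$ in the definition of $\Delta$ separately and splitting the range of $k$ at the threshold $k = r$. Set $s = \sa + 2^r\sbb$ and $d = \daa + 2^r\dbb$.

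For $k \leq r - 1$, I would show that $s + jd \equiv \sa + j\daa \pmod{2^{k+2}}$ and $d \equiv \daa \pmod{2^{k+2}}$. The cases $k \leq r - 2$ are immediate because $2^{k+2}$ divides $2^r$; the boundary value $k = r - 1$ is precisely where the evenness of $\sbb$ and $\dbb$ enters, since it makes $2^r(\sbb + j\dbb)$ a multiple of $2^{r+1} = 2^{k+2}$. Because Fact \ref{fact} makes $\ep_{k,b}(\ell, n)$ depend only on the residues of $\ell$ and $n - \ell$ modulo $2^{k+2}$, this yields $\E_{k, b_k}(s, d, m) = \E_{k, b_k}(\sa, \daa, m)$ for every $k \leq r - 1$.

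For $k \geq r$, I would exploit the dyadic structure of $\f$: any position $i$ that is a $-1$ of order $k \geq r$ is automatically a multiple of $2^r$, and the map $i \mapsto \tilde i = i/2^r$ sends it to a position of order $k - r$ with the same index $j$ in $i = 2^k(2j+1)$. The assumption $\sa + (j+1)\daa < 2^r$ ensures that the images $\tilde i$ for $i \in (s + jd, s + (j+1)d]$ fill the interval $(\sbb + j\dbb, \sbb + (j+1)\dbb]$ exactly. A short check using $\daa < 2^r$ gives $\lfloor d/2^{k+2} \rfloor = \lfloor \dbb/2^{k-r+2} \rfloor$, so the floor corrections match and one obtains $\ep_{k, b_k}(s + jd, s + (j+1)d) = \ep_{k-r, b_k}(\sbb + j\dbb, \sbb + (j+1)\dbb)$.

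The right-hand side uses $b_k$ rather than the expected $b_{k-r}$, and this mismatch is exactly what hypothesis \eqref{predpoklad} is engineered to absorb: applied with $i = k - r$, it guarantees that whenever $b_{k-r} \neq b_k$ the vectors $\E_{k-r, 1}(\sbb, \dbb, m)$ and $\E_{k-r, -1}(\sbb, \dbb, m)$ agree, so substituting $b_{k-r}$ for $b_k$ is harmless. Summing over $k \geq r$ and reindexing $k' = k - r$ then gives $\sum_{k \geq r} \E_{k, b_k}(s, d, m) = \Delta(\sbb, \dbb, m)$. To finish, I would note that for $k \geq r$ each interval $(\sa + j\daa, \sa + (j+1)\daa]$ lies in $(0, 2^r) \subseteq (0, 2^k)$ and so cannot contain any position $\equiv (2 + b)2^k \pmod{2^{k+2}}$; hence $\E_{k, b_k}(\sa, \daa, m) = 0$ for $k \geq r$, and $\sum_{k \leq r - 1}\E_{k, b_k}(\sa, \daa, m) = \Delta(\sa, \daa, m)$. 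The two halves then assemble into the claimed identity. I expect the main obstacles to be the delicate boundary case $k = r - 1$, where the evenness of $\sbb$ and $\dbb$ is indispensable, and the verification of the floor identity in the $k \geq r$ step; the role of \eqref{predpoklad} becomes transparent once the scaling bijection $i \mapsto i/2^r$ is in place.
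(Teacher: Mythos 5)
Your proof is correct, and its overall skeleton is the same as the paper's: split the sum defining $\Delta$ at $k=r$; for $k\leq r-1$ use $\sa+2^r\sbb+j(\daa+2^r\dbb)\equiv \sa+j\daa \pmod{2^{k+2}}$ (with the evenness of $\sbb,\dbb$ needed exactly at $k=r-1$) to recover the summands of $\Delta(\sa,\daa,m)$; for $k\geq r$ recover the summands of $\Delta(\sbb,\dbb,m)$ at level $k-r$, with \eqref{predpoklad} absorbing the mismatch between $b_k$ and $b_{k-r}$; and observe that $\E_{k,b_k}(\sa,\daa,m)=\nul$ for $k\geq r$ by \eqref{r}. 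Where you genuinely diverge is in the verification of the key identity $\ep_{k,b}(\scc+j\dcc,\scc+(j+1)\dcc)=\ep_{k-r,b}(\sbb+j\dbb,\sbb+(j+1)\dbb)$ for $k\geq r$. The paper works directly with the defining condition \eqref{plusone}: it multiplies the modular inequality at level $i=k-r$ by $2^r$ and uses Fact \ref{fact} together with \eqref{r} to shift both sides by $\daa$ and $-(\sa+j\daa+1)$ without leaving the residue interval, treating the two cases $\ep=1$ and $\ep=0$ separately. You instead argue on the level of the counts $D_{k,b}$: every $-1$ of order $k\geq r$ is a multiple of $2^r$, division by $2^r$ bijects these positions in $(\scc+j\dcc,\scc+(j+1)\dcc]$ onto the $-1$s of order $k-r$ in $(\sbb+j\dbb,\sbb+(j+1)\dbb]$ (using $0\leq \sa+j\daa<2^r$ at both endpoints), and the floor corrections agree because $\floor{(\daa+2^r\dbb)/2^{k+2}}=\floor{\dbb/2^{k-r+2}}$ when $0<\daa<2^r$. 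Both routes are sound; yours trades the paper's $\gmod$/$\gmodeq$ bookkeeping for a more transparent scaling bijection plus one small floor computation, which you should spell out since it is the one place where an inequality could silently fail.
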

 \begin{proof}
Denote $\scc=\sa+2^r\sbb$ and $\dcc=\daa+2^r\dbb$.

Let firstly $k\leq r-1$. Since 
\begin{align*}
	\scc+j\dcc\equiv \sa+j \daa \mod 2^{k+2}
\end{align*}
holds for each $j$ (recall that $s'$ and $d'$ are even), we have  
\begin{align}\label{sum1}
	\E_{k,b_k}(\sa,\daa,m)=\E_{k,b_k}(\scc,\dcc,m).
\end{align}

Let now $k=r+i$ with $i\geq 0$. Let $j\in \{0,1,\dots,m-1\}$ and $b\in\{-1,1\}$. If
\[\dbb\gmod (2+b)\cdot 2^{i}-(\sbb+j\dbb)-1 \mod 2^{i+2},\]
then
\begin{align*}
	 \dbb \gmodeq (2+b)\cdot 2^{i}-(\sbb+j\dbb)\gmodeq 1 \mod 2^{i+2},
\end{align*}
which implies
\begin{align}\label{ineq1}
	 2^r\dbb \gmodeq (2+b)\cdot 2^{k}-2^r(\sbb+j\dbb)\gmodeq 2^r \mod 2^{k+2}.
\end{align}
From \eqref{r} we deduce (see Fact \ref{fact})
\[2^{r} \dbb+\daa\gmod(2+b)\cdot 2^{k}-2^{r}(\sbb+j\dbb)-\sa-j \daa- 1 \mod 2^{k+2}.\]
Indeed, both sides of the inequality \eqref{ineq1} are divisible by $2^r$ whence adding $\daa$ to the left side and subtracting $\sa+j\daa+1$ from the right side keeps both sides in the interval $\{0,1,\dots,2^{k+2}-1\}$.
We have shown  
\begin{align}\label{sum2}
	\ep_{i,b}(\sbb+j\dbb,\sbb+(j+1)\dbb)=\ep_{k,b}(\scc+j\dcc,\scc+(j+1)\dcc)=1.
\end{align}

On the other hand, if 
\[ (2+b)\cdot 2^{i}-(\sbb+j\dbb)-1 \gmodeq \dbb \mod 2^{i+2},\]
then
\begin{align}
	 (2+b)\cdot 2^{k}-2^{r}(\sbb+j\dbb)-2^{r} \gmodeq 2^{r}\dbb \mod 2^{k+2}. 
\end{align}
The inequality \eqref{r} implies $2^r-\sa-j\daa-1\geq \daa$, and, as above, we deduce
\[(2+b)\cdot 2^{k}-2^{r}(\sbb+j\dbb)-\sa-j\daa-1 \gmodeq 2^{r}\dbb+\daa \mod 2^{k+2}. \]
In this case, we have
\begin{align}\label{sum3}
	\ep_{i,b}(\sbb+(j-1)\dbb,\sbb+j\dbb)=\ep_{k,b}(\scc+(j-1)\dcc,\scc+j\dcc)=0.
\end{align}

 Using the assumption \eqref{predpoklad}, we deduce from \eqref{sum2} and \eqref{sum3}
\begin{align*}
	\E_{i,b_i}(\sbb,\dbb,m)=\E_{r+i,b_{r+i}}(\scc,\dcc,m).
\end{align*}
Since the inequality \eqref{r} implies $\E_{k,b_k}(\sa,\daa,m)=\nul$ for $k\geq r$, we have
\begin{align*}
	\Delta(\scc,\dcc,m)&=\sum_{k=0}^{r-1} \E_{k,b_k}(\scc,\dcc,m)+\sum_{i=0}^\infty \E_{r+i,b_{r+i}}(\scc,\dcc,m)=\\
                       &=\Delta(\sa,\daa,m) + \Delta(\sbb,\dbb,m),   	
\end{align*}
and the proof is complete.
\end{proof}

\section{Proof of the main claim}
Additivity of $\Delta$-vectors implies that in order to solve the problem it is enough to find $\Delta$\/-vectors that sum to a constant vector and the corresponding parts of the sequence of instructions are synchronized.

Next lemma indicates an interval which is free of $-1$s of high orders.
\begin{lemma}\label{interval}
For each $t\geq 0$ there is a number $\ell_t$ such that \[D_{k,b_{k}}(\ell_t,\ell_t+2^{t+2}-1)=0\] for all $k\geq t$.
\end{lemma}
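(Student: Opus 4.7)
The plan is to construct $\ell_t$ explicitly in the form $\ell_t = (2 + b_t) \cdot 2^t + M \cdot 2^{t+2}$ for a suitably chosen non-negative integer $M$. Since $\ell_t$ is itself a multiple of $2^t$, the only integers in $(\ell_t, \ell_t + 2^{t+2} - 1]$ that are divisible by $2^t$---and hence the only possible locations of $-1$s of order $\geq t$---are the three points $\ell_t + 2^t$, $\ell_t + 2^{t+1}$, and $\ell_t + 3 \cdot 2^t$. It therefore suffices to arrange that each of these is a $+1$.

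The middle one is automatic: a direct calculation gives $\ell_t + 2^{t+1} = (4 + b_t + 4M) \cdot 2^t$, and since $b_t$ is odd the quotient $4 + b_t + 4M$ is odd and congruent to $b_t$ modulo $4$. Hence $\ell_t + 2^{t+1}$ has order exactly $t$, and it is a $+1$, because a $-1$ of order $t$ requires quotient congruent to $2 + b_t$, not $b_t$, modulo $4$. This holds for every $M$.

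For the remaining two, a direct computation shows that regardless of $b_t$, one of them equals $P := (M+1) \cdot 2^{t+2}$ and the other equals $Q := (2M + 2 + b_t) \cdot 2^{t+1}$. The quotient of $Q$ is odd, so $Q$ has order exactly $t+1$; requiring $Q$ to be a $+1$ means its quotient must not be congruent to $2 + b_{t+1}$ modulo $4$, which pins down the parity of $M$ as a function of $(b_t, b_{t+1})$. For $P$, write $M + 1 = 2^u N'$ with $N'$ odd; then $P$ has order $t + 2 + u$, and is a $+1$ precisely when $N' \not\equiv 2 + b_{t+2+u} \pmod{4}$.

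The only real obstacle is that the order of $P$ depends on $M$, which a priori could force an unbounded recursion; but one additional bit of $M$ is enough to terminate. If the constraint coming from $Q$ forces $M$ to be even, then $u = 0$ and $N' = M + 1$, so choosing $M$ in the appropriate class modulo $4$ handles $P$ as well. If $M$ is forced to be odd, write $M = 2M' + 1$ and fix $M'$ to be even, so that $M + 1 = 2(M' + 1)$ with $M' + 1$ odd, giving $u = 1$; then $N' = M' + 1$, and a final choice of $M' \pmod{4}$ places $N'$ in the residue class modulo $4$ demanded by $b_{t+3}$. In every case a valid $M$ exists, and the corresponding $\ell_t$ has the required property.
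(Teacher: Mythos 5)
Your proof is correct, and it produces exactly the same values $\ell_t$ as the paper while organizing the argument quite differently. The paper's proof is a verification: it exhibits a sixteen-row table of values $\ell(x_0,x_1,x_2,x_3)$, states that the case $t=0$ ``can be directly checked'', and lifts to general $t$ by the scaling $\ell_t=2^t\,\ell(b_t,b_{t+1},b_{t+2},b_{t+3})$, using the fact that division by $2^t$ sends $-1$s of order $k\geq t$ to $-1$s of order $k-t$. You instead work at level $t$ from the outset with the ansatz $\ell_t=(2+b_t)\cdot 2^t+M\cdot 2^{t+2}$ and \emph{derive} the table: since only the three multiples of $2^t$ inside the interval can have $2$-adic valuation at least $t$, and since the valuation of a position determines the unique order at which it could be a $-1$, the lemma reduces to three local conditions --- the middle point is automatically safe, the point of order $t+1$ fixes $M$ modulo $2$, and the point $(M+1)\cdot 2^{t+2}$ is settled by one further bit of $M$ once its valuation is pinned to $t+2$ or $t+3$. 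Your handling of the one delicate issue (the order of $P$ depending on $M$) is sound. This derivation explains structurally why $\ell_t$ depends on exactly the four instructions $b_t,\dots,b_{t+3}$ and why, for each fixed $(b_t,b_{t+1})$, only one of $b_{t+2},b_{t+3}$ ever matters; one can check that your constraints reproduce the paper's table verbatim (e.g.\ $M=1$, $\ell=7$ for $x_0=x_1=x_3=1$, and $M=5$, $\ell=23$ when $x_3=-1$). The paper's route is shorter on the page; yours is self-contained and checkable without case-by-case computation.
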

\begin{proof} The number $\ell_t$ depends on values of $b_t$, $b_{t+1}$, $b_{t+2}$ and $b_{t+3}$ and we will give it explicitly.
Let $\ell=\ell(x_0,x_1,x_2,x_3)$, with $x_i\in \{1,-1\}$, be given by Figure \ref{hodnoty}. 
It can be directly checked that 
\[D_{0,x_0}(\ell,\ell+3)=D_{1,x_1}(\ell,\ell+3)=D_{2,x_2}(\ell,\ell+3)=D_{3,x_3}(\ell,\ell+3)=0\]
and also
\[D_{k,b}(\ell,\ell+3)=0\]
for both $b\in \{1,-1\}$ and each $k\geq 4$. Consequently, $\ell_0=\ell(b_0,b_1,b_2,b_3)$ has desired properties. Note that this is equivalent to $f_{\ell_0+1}=f_{\ell_0+2}=f_{\ell_0+3}=1$.

In general, we define \[\ell_t=2^t\ell(b_t,b_{t+1},b_{t+2},b_{t+3}).\]
Suppose that there is a number $n\equiv(2+b_k)\cdot 2^k \mod 2^{k+2}$ with $k\geq t$ and $\ell_t<n<\ell_t+2^{t+2}$.
Then $n'=2^{-t}n$ satisfies \[n'\equiv(2+b_k)\cdot 2^{k-t} \mod 2^{k-t+2}\] and
\[\ell(b_t,b_{t+1},b_{t+2},b_{t+3})<n'<\ell(b_t,b_{t+1},b_{t+2},b_{t+3})+4,\]
a contradiction with the first part of the proof.
\end{proof}

\begin{figure}
\centering 
\begin{tabular}{r r r r | c} 
\hline 
\ &\  &\ &\ &  \\[-2ex]
$x_0$ & $x_1$ & $x_2$ & $x_3$ & $\ell$ \\ [0.5ex] 
\hline 
\ &\  &\ &\ &  \\[-1.3ex] 
$1$ & $1$ & $1$ & $1$ & $7$ \\ 
$-1$ & $1$ & $1$ & $1$ & $1$\\
$1$ & $-1$ & $1$ & $1$ & $3$\\
$-1$ & $-1$ & $1$ & $1$ & $5$\\
$1$ & $1$ & $-1$ & $1$ & $7$\\
$-1$ & $1$ & $-1$ & $1$ & $9$\\
$1$ & $-1$ & $-1$ & $1$ & $11$\\
$-1$ & $-1$ & $-1$ & $1$ & $5$\\
$1$ & $1$ & $1$ & $-1$ & $23$\\
$-1$ & $1$ & $1$ & $-1$ & $1$\\
$1$ & $-1$ & $1$ & $-1$ & $3$\\
$-1$ & $-1$ & $1$ & $-1$ & $21$\\
$1$ & $1$ & $-1$ & $-1$ & $23$\\
$-1$ & $1$ & $-1$ & $-1$ & $9$\\
$1$ & $-1$ & $-1$ & $-1$ & $11$\\
$-1$ & $-1$ & $-1$ & $-1$ & $21$\\
\hline 
\end{tabular}
\caption{Values of $\ell(x_0,x_1,x_2,x_3)$.}
\label{hodnoty}
\end{figure}

\begin{remark}
	The interval $(\ell_t,\ell_t+2^{t+2}-1]$ from the previous lemma has length $2^{t+2}-1$ and contains no number equivalent to $(2+b_t)\cdot 2^t \mod 2^{t+2}$. Therefore, it spreads between two consecutive occurrences of such numbers. Moreover, the word $f_{\ell_t+1}f_{\ell_t+2}\cdots f_{\ell_t+2^{t+2}-1}$ has a period $2^{t+1}$, and it is of the form $w1w$. Lemma \ref{interval} is therefore a slightly stronger form of   \cite[Proposition 5]{narad}.  
\end{remark}

Lemma \ref{interval} has the following important consequence.

\begin{lemma}\label{Enula}
Let $t\geq 1$, $0\leq u \leq t$ and $0\leq p \leq 2^{t-u}-1$. Then
\begin{enumerate}
	\item \label{nula0} for each $k\geq 0$ such that $k\leq u-2$ or $k\geq t-1$, we have
	\[
\E_{k,b_k}(\ell_{t-1}+2^u p,2^u,2^{t-u})=\nul\,;
	\]
	\item \label{nula1} moreover, 
	\[
\E_{k,b}(\ell_{t-1}+2^u p,2^u,2^{t-u})=\nul
	\]
	for both $b\in\{-1,1\}$ and each $k\geq 0$ such that $k\leq u-2$ or $k\geq t+3$.
\end{enumerate}
\end{lemma}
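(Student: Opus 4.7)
The plan is to split each part of the lemma into a low-$k$ regime ($k \leq u-2$), handled uniformly by formula \eqref{2}, and a high-$k$ regime, reduced to a statement about a single enclosing interval. In the low-$k$ regime, each sub-interval $(\ell_{t-1}+2^u(p+j),\,\ell_{t-1}+2^u(p+j+1)]$ has length exactly $2^u$, so \eqref{2} immediately yields $\ep_{k,b}=0$ for both residues $b \in \{-1,1\}$; this settles that portion of (1) and (2) in one stroke.

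For the high-$k$ regime I would first observe that all $m=2^{t-u}$ sub-intervals lie inside the enclosing interval $(\ell_{t-1},\,\ell_{t-1}+2^{t+1}-1]$, since $p+j+1 \leq 2 \cdot 2^{t-u}-1$ yields $2^u(p+j+1) \leq 2^{t+1}-2^u$. Moreover, in this regime ($k \geq u-1$) formula \eqref{2} gives $\ep_{k,b}=D_{k,b}$ on each sub-interval, so everything reduces to showing that the enclosing interval contains no integer in the relevant residue class modulo $2^{k+2}$. For Part (1) with $k \geq t-1$ this is exactly the conclusion of Lemma \ref{interval} applied at level $t-1$.

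Part (2) with $k \geq t+3$ is the delicate case, because Lemma \ref{interval} controls only the residue $b_k$ whereas I now need both residues simultaneously. My plan is to exploit the explicit formula $\ell_{t-1}=2^{t-1}L$, where $L=\ell(b_{t-1},b_t,b_{t+1},b_{t+2})$, together with the observation that every entry in Figure \ref{hodnoty} is odd and at most $23$. A short 2-adic analysis of $n=\ell_{t-1}+c$ for $1 \leq c \leq 2^{t+1}-1$ then handles the claim: if the 2-adic valuation of $c$ is not $t-1$, then $n$ has 2-adic valuation at most $t-1 < k$; and if it equals $t-1$, then $c \in \{2^{t-1},\,3\cdot 2^{t-1}\}$, so $n=2^{t-1}(L+M)$ with $M \in \{1,3\}$ and $L+M \leq 26$. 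The main obstacle is the finite check that $16 \nmid (L+M)$ for each admissible pair $(L,M)$, which yields 2-adic valuation at most $t+2 < k$ and consequently $D_{k,b}=0$ for both residues $b$.
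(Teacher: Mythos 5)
Your proof is correct, and its skeleton coincides with the paper's: the case $k\leq u-2$ is dispatched directly by \eqref{2}; for the remaining $k$ one observes that all $2^{t-u}$ sub-intervals lie in $(\ell_{t-1},\ell_{t-1}+2^{t+1}-1]$, converts $\ep_{k,b}$ into $D_{k,b}$ via \eqref{2}, and invokes Lemma \ref{interval} at level $t-1$ for part \eqref{nula0}. The one genuine divergence is in part \eqref{nula1}, where both residues $b$ must be handled. The paper's (very terse) justification is the remark that $\ell_{t-1}$ depends on $b_{t-1},b_t,b_{t+1},b_{t+2}$ only: since $k\geq t+3$ lies outside that window, the first part of the proof of Lemma \ref{interval} --- which establishes $D_{k,b}(\ell,\ell+3)=0$ for \emph{both} $b$ once $k\geq 4$, independently of the instruction values --- applies after scaling by $2^{t-1}$. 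You instead re-derive this fact from scratch by a $2$-adic valuation argument: any $n\equiv(2+b)\cdot 2^k \bmod 2^{k+2}$ has valuation exactly $k\geq t+3$, while writing $n=2^{t-1}L+c$ and using that every $L$ in Figure \ref{hodnoty} is odd, at most $23$, and never equal to $13$ or $15$ (so $16\nmid L+M$ for $M\in\{1,3\}$) bounds the valuation of every $n$ in the enclosing interval by $t+2$. Your route is more self-contained and makes explicit the finite check that the paper leaves buried inside the proof of Lemma \ref{interval}, at the cost of redoing a computation the paper simply reuses; both arguments are sound, and the threshold $t+3$ comes out identically in each.
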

\begin{proof} All intervals covered by the vector $\E_{k,b}(\ell_{t-1}+2^u p,2^u,2^{t-u})$ lie in $(\ell_{t-1},\ell_{t-1}+2^{t+1})$.
The claim now follows from \eqref{2}, Lemma \ref{interval} and from the fact that $\ell_{t-1}$ depends on values $b_{t-1}$, $b_{t}$, $b_{t+1}$ and $b_{t+2}$ only. 
\end{proof}

We now indicate vectors with a constant sum.
\begin{lemma}\label{konstant}
	For each $t\geq 2$ and $1\leq u < t$ 
	\[\vek:=\sum_{p=0}^{2^{t-u}-1}\Delta(\ell_{t-1}+2^u p,2^u,2^{t-u}),\]
is a constant vector.
\end{lemma}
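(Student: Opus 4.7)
My plan is to show that each entry of $\vek$ equals the same integer. I would fix an arbitrary index $j \in \{0, 1, \ldots, 2^{t-u}-1\}$, evaluate the $j$-th entry $\vek_j$, and verify that the resulting value does not depend on $j$.

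Unfolding the definitions, the $j$-th component of $\Delta(\ell_{t-1}+2^u p, 2^u, 2^{t-u})$ is
\[\sum_{k=0}^{\infty} \ep_{k, b_k}\bigl(\ell_{t-1}+2^u (p+j),\, \ell_{t-1}+2^u (p+j+1)\bigr).\]
By Lemma \ref{Enula}\eqref{nula0}, only indices $k \in \{u-1, u, \ldots, t-2\}$ contribute. Substituting $q = p+j$ in the outer summation over $p$, I would write
\[\vek_j = \sum_{k=u-1}^{t-2} \sum_{q=j}^{j+2^{t-u}-1} \ep_{k, b_k}\bigl(\ell_{t-1}+2^u q,\, \ell_{t-1}+2^u (q+1)\bigr),\]
reducing the problem to showing that, for each fixed $k$ in the range, the inner sum is independent of the starting index $j$ of the window.

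The key observation is that for each fixed $k \in \{u-1, \ldots, t-2\}$ the summand is a periodic function of $q$. Indeed, by \eqref{plusone}, it is determined by $(\ell_{t-1} + 2^u q) \bmod 2^{k+2}$; since $2^u q \bmod 2^{k+2}$ depends only on $q \bmod 2^{k+2-u}$, the summand has period $2^{k+2-u}$ in $q$. Because $k \leq t-2$, we have $2^{k+2-u} \mid 2^{t-u}$, so the window of length $2^{t-u}$ consists of exactly $2^{t-k-2}$ full periods. Consequently, the inner sum equals $2^{t-k-2}$ times the sum over one full period of residues modulo $2^{k+2-u}$, a quantity that has no dependence on $j$.

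Summing this conclusion over $k \in \{u-1, \ldots, t-2\}$ yields that $\vek_j$ is a constant independent of $j$, giving the desired statement. I expect the only real care to be needed in the periodicity argument: one must correctly identify that the division $k+2-u \geq 1$ holds throughout the effective range (with the extremal case $k = u-1$ giving the minimal period $2$) and that this period always divides the window length $2^{t-u}$. Everything else is a straightforward application of Lemma \ref{Enula} together with a reindexing.
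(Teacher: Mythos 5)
Your proof is correct. It follows the paper's reduction exactly---Lemma \ref{Enula}\eqref{nula0} to cut the $k$-range down to $\{u-1,\dots,t-2\}$, and the reindexing $q=p+j$---but handles the final step differently. The paper invokes \eqref{2} in the reverse direction: for $k\geq u-1$ the value $\ep_{k,b_k}$ on an interval of length $2^u$ \emph{is} the exact count $D_{k,b_k}$, so the inner sum over $q$ telescopes into $D_{k,b_k}(\ell_{t-1}+2^u j,\,\ell_{t-1}+2^{t}+2^u j)$, an exact count over an interval of length $2^t$ divisible by $2^{k+2}$, hence equal to $2^{t-k-2}$ for every $j$. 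You instead read off from \eqref{plusone} that the summand is periodic in $q$ with period $2^{k+2-u}$, which divides the window length $2^{t-u}$ precisely because $k\leq t-2$. Both arguments rest on the same divisibility $2^{k+2}\mid 2^{t}$; the paper's version has the small bonus of producing the constant explicitly as $\sum_{k=u-1}^{t-2}2^{t-k-2}=2^{t-u}-1$, whereas yours establishes constancy without naming the value---which is all the lemma requires.
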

\begin{proof}
From the definition of $\Delta$ and Lemma \ref{Enula}\eqref{nula0}, we deduce
\[
\vek=(\vek_0,\vek_1,\dots,\vek_{2^{t-u}-1})=\sum_{p=0}^{2^{t-u}-1}\sum_{k=u-1}^{t-2}\E_{k,b_k}(\ell_{t-1}+2^u p,2^u,2^{t-u}),
\]
and
\[
\vek_i=\sum_{p=0}^{2^{t-u}-1}\sum_{k=u-1}^{t-2}\ep_{k,b_k}(\ell_{t-1}+2^u(p+i),\ell_{t-1}+2^u(p+i+1)).
\]
By \eqref{2}, we have (see Figure \ref{obloucky})
\[
\vek_i=\sum_{k=u-1}^{t-2}D_{k,b_k}(\ell_{t-1}+2^u i,\ell_{t-1}+2^{t}+2^u i).
\]
Since the length of the interval $(\ell_{t-1}+2^u i,\ell_{t-1}+2^{t}+2^u i)$ is $2^{t}$, which is divisible by $2^{k+2}$ for all $k=\{0,1,\dots, t-2\}$, we finally obtain
\[
\vek_i=\sum_{k=u-1}^{t-2}\frac {2^{t}}{2^{k+2}}= 2^{t-u}-1,
\]
which is independent of $i$ as claimed.
\end{proof}
NB: the previous lemma holds also for $u=0$. We omit this case for sake of simplicity.
\begin{figure}
\begin{center}
\begin{tikzpicture}
\def\rad{1}
\def\jed{1}
\def\oblouk#1#2{\draw (#1*\jed,#2*\rad)..controls (#1*\jed+0.2*\jed,#2*\rad+0.4*\rad) and (#1*\jed+0.8*\jed,#2*\rad+0.4*\rad) ..(#1*\jed+1*\jed,#2*\rad);}
\def\vybarvi#1#2{\fill[green!50] (#1*\jed,#2*\rad)..controls (#1*\jed+0.2*\jed,#2*\rad+0.4*\rad) and (#1*\jed+0.8*\jed,#2*\rad+0.4*\rad) ..(#1*\jed+1*\jed,#2*\rad)--cycle;}
%%%%%%%%%%%%
\foreach \y in {0,1,2,3}
{\vybarvi{2+\y}{-\y}
\foreach \x in {0,1,2,3}
{\oblouk{\x+\y}{-\y}}
}
\foreach \x in {0,1,2,3}
{\draw[shorten <=-5pt,shorten >=0.2pt, o-*] (0,-\x*\rad)--(4*\jed,-\x*\rad);\draw[shorten >=-5pt,-o](4*\jed,-\x*\rad)--(8*\jed,-\x*\rad);}
\node at (0,0.6*\rad){$\ell_{t-1}$};
\node at (4*\jed,0.6*\rad){$\ell_{t-1}+2^t$};
\node at (8*\jed,0.6*\rad){$\ell_{t-1}+2^{t+1}$};
\end{tikzpicture}
\end{center}
\caption{Illustration of Lemma \ref{konstant}. Green intervals sum to $\vek_2$.} \label{obloucky}
\end{figure}

We are ready to prove the main claim of the paper.
\begin{theorem}\label{main}
All paper-folding words contain arbitrarily large abelian powers. 
\end{theorem}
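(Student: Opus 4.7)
The plan is to fix an arbitrary integer $K \geq 1$ and produce an abelian $2^K$-power in $\f$; since $K$ is arbitrary, this proves the theorem. The strategy is to take the constant vector $\vek$ of Lemma \ref{konstant} (with $m = 2^K$) and realize it as a single $\Delta(s^*,d^*,m)$ by iterating the additivity Lemma \ref{add} on the $m$ summands $\Delta(\ell_{t-1}+2^u p,2^u,m)$, $p=0,1,\dots,m-1$.

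The synchronization hypothesis of Lemma \ref{add} is the delicate point, so I first tune $u$ and $t$ to $\bb$. Since there are only $2^{K+4}$ binary words of length $K+4$, pigeonhole forces some factor $w$ of $\bb$ of length $K+4$ to occur at infinitely many positions $q_1 < q_2 < \cdots$ in $\bb$. I then set $u = q_1+1$ and $t = u+K$, which satisfy $1 \leq u < t$ and $t \geq 2$ as required by Lemma \ref{konstant}. Thus $w = b_{u-1}b_u\cdots b_{t+2}$, and each difference $q_j - q_1$ is a shift along which this window of $\bb$ repeats.

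Next, I combine the summands one at a time, starting from $(S_0,D_0) = (\ell_{t-1}, 2^u)$. At step $k \in \{1,2,\dots,m-1\}$, I apply Lemma \ref{add} to $\Delta(S_{k-1},D_{k-1},m)$ and $\Delta(\ell_{t-1}+2^u k,2^u,m)$ with a shift $r_k := q_{j_k}-q_1$, where $j_k$ is taken large enough to guarantee $2^{r_k} > S_{k-1}+m D_{k-1}$. The required evenness of $s' = \ell_{t-1}+2^u k$ and $d' = 2^u$ follows from $u\geq 1$ and from $\ell_{t-1}$ being divisible by $2^{t-1}\geq 2$. By Lemma \ref{Enula}\eqref{nula1}, both $\E_{i,1}(s',d',m)$ and $\E_{i,-1}(s',d',m)$ vanish outside $i\in[u-1,t+2]$, so the synchronization condition reduces to $b_i=b_{i+r_k}$ on that window of length $K+4$, which is exactly what the choice of $r_k$ provides.

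After the $(m-1)$-th iteration the accumulated vector $\Delta(S_{m-1},D_{m-1},m)$ equals $\vek$, a constant vector, so the factor $f_{S_{m-1}+1}f_{S_{m-1}+2}\cdots f_{S_{m-1}+m D_{m-1}}$ of $\f$ is an abelian $m$-power. The main obstacle is exactly this synchronization step: for an arbitrary $\bb$ one cannot apply the additivity lemma at all without first locating a frequently recurring window of $\bb$, and the pigeonhole with the block length $K+4$ dictated by Lemma \ref{Enula}\eqref{nula1} is what makes the entire iteration feasible.
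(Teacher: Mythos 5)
Your proof is correct and follows essentially the same route as the paper: a pigeonhole choice of $u$ and $t$ so that the window $b_{u-1}\cdots b_{t+2}$ of length $K+4$ recurs infinitely often, then an iterated application of Lemma \ref{add} to accumulate the summands of Lemma \ref{konstant}, with Lemma \ref{Enula}\eqref{nula1} discharging the synchronization hypothesis \eqref{predpoklad}. The only difference is cosmetic: you spell out the pigeonhole argument that the paper leaves implicit when it asserts that suitable $u$ and $t$ exist.
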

\begin{proof}
Let $\f$ be a paper-folding word defined by a sequence of instructions $\bb$. Let $m=2^q$ for some $q\in \NN$. We shall find an abelian $m$-power in $\f$. Choose $u\geq 1$ and $t\geq 2$ such that $t-u=q$ and the factor $$b_{u-1}b_ub_{u+1}\cdots b_{t+1}b_{t+2}$$ occurs infinitely many times in $\bb$.
Using Lemma \ref{add} inductively, we construct numbers $s_{j}$ and $d_j$, with $j\in \{1,$ $2,\dots,$ $m\}$,  such that 
\[
\Delta(s_j,d_j,m)=\sum_{p=0}^{j-1}\Delta(\ell_{t-1}+2^u p,2^u,m).
\]
Clearly, $s_1=\ell_{t-1}$ and $d_1=2^u$. Numbers $s_{j+1}$, $d_{j+1}$ are defined by
\begin{align*}
	s_{j+1}&=s_j+2^{r_j}(\ell_{t-1}+2^uj), & d_{j+1}&=d_j+2^{r_j}2^u,
\end{align*}   
where $r_j$ is chosen to satisfy $2^{r_j}>s_j+md_j$ 
and
\[\text{$b_{i}=b_{i+r_j}$ for each $i=u-1,u,\dots,t+1,t+2$}.\] Since $u\geq 1$ and $t\geq 2$, all $d_j$ and $s_j$ are even. Lemma \ref{Enula}\eqref{nula1} implies that the assumption \eqref{predpoklad} of Lemma \ref{add} is fulfilled, whence
\[
\Delta(s_{j+1},d_{j+1},m)=\Delta(s_j,d_j,m)+\Delta(\ell_{t-1}+2^u j,2^u,m)=\sum_{p=0}^{j}\Delta(\ell_{t-1}+2^u p,2^u,m)
\]
as required. By Lemma \ref{konstant}, the vector $\Delta(s_m,d_m,m)$ is a constant vector, which means that 
\[f_{s_m+1}f_{s_m+2}\cdots f_{s_m+md_m},\]
is an abelian $m$-power.
\end{proof}

\section{Examples}
Let us use Theorem \ref{main} to find an abelian fourth power in the regular paper-folding sequence. Regularity allows to ignore the condition \eqref{predpoklad} of Lemma \ref{add}.  

Choose $u=1$ and $t=3$. We have $\ell_2=4\cdot \ell_0(1,1,1,1)=4\cdot 7=28$.
Lemma \ref{interval} claims that the word
 $f_{29}f_{30}\cdots f_{43}$ does not contain $-1$s of order higher than one. This, in particular, means that the word has period $8$.
\[
 \begin{tabular}{c | c c c c c c c c c c c c c c c c c } % centered columns (4 columns)
$i$ & $29$ & $30$ & $31$ & $32$ & $33$ & $34$ & $35$ & $36$ & $37$ & $38$ & $39$ & $40$ & $41$ & $42$ & $43$ \\ [0.5ex] \hline 
\\[-1em]
$f_i$ & $1$ & $-1$ & $-1$ & $1$ & $1$ & $1$ & $-1$ & $1$ & $1$ & $-1$ & $-1$ & $1$ & $1$ & $1$ & $-1$ \\ [0.5ex] 
\end{tabular}
\]
We want to sum vectors
\begin{align*}
		\Delta(28,2,4)&=(1,1,0,1),\\
		\Delta(30,2,4)&=(1,0,1,1),\\
		\Delta(32,2,4)&=(0,1,1,1),\\
		\Delta(34,2,4)&=(1,1,1,0).
\end{align*}
In Lemma \ref{add} we choose $r=r_1=6$ since $2^6>28+4\cdot 2=36$, and obtain
\[\Delta(28,2,4)+\Delta(30,2,4)=\Delta(28+2^6\cdot 30,2+2^6\cdot 2,4)=\Delta(1948,130,4)=(2,1,1,2).\]
 Since $2^{12}>1948+4\cdot 130>2^{11}$, we put $r_2=12$, and get
\begin{align*}
	\Delta(1948,130,4)+\Delta(32,2,4)&=\Delta(1948+2^{12}\cdot 32,130+2^{12}\cdot 2,4)=\\
	&=\Delta(133\,020,8322,4)=(2,2,2,3).
\end{align*}
Finally, with $r_3=18$, we have
\[\Delta(133\,020,8322,4)+\Delta(34,2,4)=\Delta(9\,045\,916,532\,610,4)=(3,3,3,3).\]
The following tables illustrate how the addition works.
\begin{align*}
&	 \begin{tabular}{c | c  }
$k$ & $\E_{k,1}(28,2,4)$ \\ [0.5ex] 
\hline 
\\[-1em]
$0$ & $(0,1,0,1)$ \\ 
$1$ & $(1,0,0,0)$  
\end{tabular}
&
&	 \begin{tabular}{c | c  }
$k$ & $\E_{k,1}(30,2,4)$ \\ [0.5ex] 
\hline 
\\[-1em]
$0$ & $(1,0,1,0)$ \\ 
$1$ & $(0,0,0,1)$  
\end{tabular}\\
%%%%%%%%%%%%
&	 \begin{tabular}{c | c  }
$k$ & $\E_{k,1}(32,2,4)$ \\ [0.5ex] 
\hline 
\\[-1em]
$0$ & $(0,1,0,1)$ \\ 
$1$ & $(0,0,1,0)$  
\end{tabular}
&
&	 \begin{tabular}{c | c  }
$k$ & $\E_{k,1}(34,2,4)$ \\ [0.5ex] 
\hline 
\\[-1em]
$0$ & $(1,0,1,0)$ \\ 
$1$ & $(0,1,0,0)$  
\end{tabular}
\end{align*}
\[
\begin{tabular}{c | c  }
$k$ & $\E_{k,1}(9\,045\,916,532\,610,4)$ \\ [0.5ex] 
\hline 
\\[-1em]
$0$ & $(0,1,0,1)$ \\ 
$1$ & $(1,0,0,0)$ \\ 
$2$ & $(0,0,0,0)$ \\
$3$ & $(0,0,0,0)$ \\
$4$ & $(0,0,0,0)$ \\
$5$ & $(0,0,0,0)$ \\
$6$ & $(1,0,1,0)$ \\ 
$7$ & $(0,0,0,1)$ \\ 
$8$ & $(0,0,0,0)$ \\
$9$ & $(0,0,0,0)$ \\
$10$ & $(0,0,0,0)$ \\
$11$ & $(0,0,0,0)$ \\
$12$ & $(0,1,0,1)$ \\ 
$13$ & $(0,0,1,0)$ \\
$14$ & $(0,0,0,0)$ \\
$15$ & $(0,0,0,0)$ \\
$16$ & $(0,0,0,0)$ \\
$17$ & $(0,0,0,0)$ \\
$18$ & $(1,0,1,0)$ \\ 
$19$ & $(0,1,0,0)$ 
\end{tabular}
\]

Lemma \ref{add}, however, can be used to find more reasonable fourth power. One easily verifies that $\Delta(6,1,4)=(1,0,0,0)$ and $\Delta(0,2,4)=(0,1,1,1)$. Since $2^4>6+4\cdot 1$, we have
\[\Delta(6,1,4)+\Delta(0,2,4)=\Delta(6+2^4\cdot 0,1+2^4\cdot 2,4)=\Delta(6,33,4)=(1,1,1,1).\]
Corresponding tables are as follows.
\[
 \begin{tabular}{c | c  }
$k$ & $\E_{k,1}(6,1,4)$ \\ [0.5ex] 
\hline 
\\[-1em]
$0$ & $(1,0,0,0)$ \\ 
$1$ & $(0,0,0,0)$  
\end{tabular}
\quad\quad
 \begin{tabular}{c | c  }
$k$ & $\E_{k,1}(0,2,4)$ \\ [0.5ex] 
\hline 
\\[-1em]
$0$ & $(0,1,0,1)$ \\ 
$1$ & $(0,0,1,0)$  
\end{tabular}
\]
%%%%%%
\[
\begin{tabular}{c | c  }
$k$ & $\E_{k,1}(6,33,4)$ \\ [0.5ex] 
\hline 
\\[-1em]
$0$ & $(1,0,0,0)$ \\ 
$1$ & $(0,0,0,0)$ \\
$2$ & $(0,0,0,0)$ \\ 
$3$ & $(0,0,0,0)$ \\
$4$ & $(0,1,0,1)$ \\ 
$5$ & $(0,0,1,0)$  
\end{tabular}
\]
Looking at the zero vectors for $k=1,2,3$ in the resulting table, one may be tempted to think that $2^4$ is unnecessarily large scaling ratio. However, this is not true, since we have
 \[\Delta(6,1,4)+\Delta(0,2,4)\neq\Delta(6+2^3\cdot 0,1+2^3\cdot 2,4)=\Delta(6,17,4)=(1,1,2,0).\]
 %%%%%%
\[
\begin{tabular}{c | c  }
$k$ & $\E_{k,1}(6,17,4)$ \\ [0.5ex] 
\hline 
\\[-1em]
$0$ & $(1,0,0,0)$ \\ 
$1$ & $(0,0,0,0)$ \\
$2$ & $(0,0,0,0)$ \\ 
$3$ & $(0,1,1,0)$ \\
$4$ & $(0,0,1,0)$ \\ 
\end{tabular}
\]
\section*{Acknowledgments}
I am grateful to James Currie, Narad Rampersad and, in particular, to Thomas Stoll for inspiring discussion about the problem and for useful comments. I also thank Jeffrey Shallit for his remarks.

%\bibliography{pfbib}\bibliographystyle{plain}

\end{document}